\def\bX{\mathbf{X}}
\def\bY{\mathbf{Y}}
\def\bx{\mathbf{x}}
\def\EE{\mathbb{E}}
\def\FF{\mathbb{F}}
\def\bF{{\mathbb{F}}}
\def\CC{\mathbb{C}}
\def\RR{\mathbb{R}}
\def\SMS{\mathop{{\rm SMS}}\nolimits}
\newtheorem{theorem}{Theorem}
\newtheorem{remark}{Remark}
\newtheorem{lemma}{Lemma}
 \newenvironment{proof}{\par \noindent
            {\it Proof. \hspace{2mm}}}{\hfill$\Box$ \vspace*{3mm}}
\def\Label{\label}
\begin{document}
\title{Secure Modulo Sum via Multiple Access Channel
\thanks{The work reported here was supported in part by 
Guangdong Provincial Key Laboratory (Grant No. 2019B121203002),
the JSPS Grant-in-Aid for Scientific Research (A) No.17H01280, (B) No. 16KT0017,
and
Kayamori Foundation of Informational Science Advancement.}
}

\author{
        Masahito~Hayashi,~\IEEEmembership{Fellow,~IEEE}
\thanks{Masahito Hayashi is with 
Shenzhen Institute for Quantum Science and Engineering, Southern University of Science and Technology,
Shenzhen, 518055, China,
Guangdong Provincial Key Laboratory of Quantum Science and Engineering,
Southern University of Science and Technology, Shenzhen 518055, China,
Shenzhen Key Laboratory of Quantum Science and Engineering, Southern
University of Science and Technology, Shenzhen 518055, China,
and
the Graduate School of Mathematics, Nagoya University, Nagoya, 464-8602, Japan
(e-mail:hayashi@sustech.edu.cn).}}

\if0
\author{
 \IEEEauthorblockN{Masahito Hayashi$^{a,b,c}$}
  \IEEEauthorblockA{$~^{a}$Graduate School of Mathematics, Nagoya University \\
$^b$Shenzhen Institute for Quantum Science and Engineering, Southern University of Science and Technology\\
$^c$Centre for Quantum Technologies, National University of Singapore \\
    Email: {masahito@math.nagoya-u.ac.jp} }
} 
\fi

\maketitle

\begin{abstract}
We discuss secure computation of modular sum 
when multiple access channel from 
distinct players $A_1, \ldots, A_c$ to a third party (Receiver) is given.
Then, we define the secure modulo sum capacity as the supremum of the transmission rate of 
modulo sum without information leakage of other information.
We derive its useful lower bound, which is numerically calculated under a realistic model that can be realizable as a Gaussian multiple access channel (MAC).
\end{abstract}

\begin{IEEEkeywords} 
modulo sum,
secure communication,
multiple access channel,
secrecy analysis
\end{IEEEkeywords}

\section{Introduction}
Recently, secure multiparty computation has been actively studied even from the information theoretical viewpoint.
One of its simple examples is secure computation of modular sum \cite{CK,CS}.
This problem has been often discussed from the Shannon-theoretic viewpoint \cite{LA,BKMP}.
If we assume secure communication channels between a part of players,
secure computation of modular sum is possible.
Secure communication channels consumes cryptographic resources.
In this paper, instead of such resources, we focus on 
multiple access channel (MAC), which has been actively studied for a long time \cite{Shannon,Ahlswede,Liao}.
That is, we study how to realize secure computation via MAC instead of conventional secure communication channels.
As a typical example, we investigate secure computation of modular sum when a MAC from 
distinct players $A_1, \ldots, A_c$ to a third party (Receiver) is given.
That is, each player $A_i$ sends a 
sequence of their secure random number $M_i=(M_{i,1} , \ldots, M_{i,k})\in \FF_q^k$
that is subject to the uniform distribution,
and Receiver recovers only the modulo sums 
$M_{[c]}:=(\oplus_{i=1}^c M_{i,1}, \ldots, \oplus_{i=1}^c M_{i,k})$ as Fig. \ref{FT},
where $\oplus_{i=1}^c M_{i,1}:=M_{1,j}\oplus \cdots \oplus M_{c,j}$
and $[c]:=\{1, \ldots, c\}$.
Here, to keep the secrecy, it is required that Receiver cannot obtain any information except for the modulo sums.
In this way, we can compute the modulo sums.
As a simple case \cite{GBP1,GBP2}, we may consider the channel where
the output signal $Y$ of the channel is given as
\begin{align}
Y= X_1\oplus \cdots \oplus X_c\oplus N \Label{MYU}
\end{align}
with the $i$-th input variables $X_i$ and the noise variable $N$, where $\oplus$ is the modulo sum. 

\begin{figure}
\begin{center}
\includegraphics[scale=0.35]{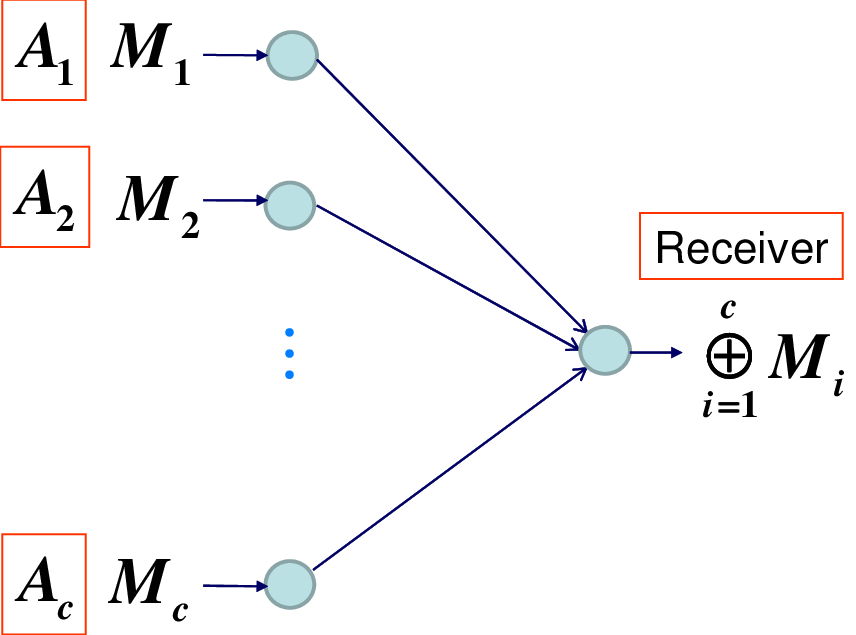}
\end{center}
\caption{Secure computation of modulo sum via multiple access channel.
}
\Label{FT}
\end{figure}%

When all senders encode their message by using the same linear code,
the receiver cannot obtain any information with respect to each sender's message
because each message is subject to the uniform distribution.
Hence, the code is decodable under the noise $N$,
the receiver can recover the modulo sums $(\oplus_{i=1}^c M_{i,1}, \ldots, \oplus_{i=1}^c M_{i,k})$.
However, the real channel does not has such a form.
A realistic example is a Gaussian MAC, whose typical case is given as follows.
The receiving signal $Y$ takes values in the set of real numbers $\RR$, and 
the $i$-th sender's signal $X_i$ takes values in discrete values $\{0,1,\ldots, p-1\}$.
Then, a typical channel is given as
\begin{align}
Y=E (X_1+\cdots +X_c)+N \in \RR,\Label{YFR}
\end{align}
where $E$ is a constant and $N$ is the Gaussian variable with average 0 and variance $v$.
In this case, a part of information for the message of each sender might be leaked to the receiver. 
To cover Gaussian MAC, this paper addresses
a general MAC.

If the secrecy condition is not imposed and the channel is a Gaussian MAC, 
this problem is a simple example of computation-and-forward \cite{Nazer08,Nazer11,Nazer16,Niesen12,Feng13,Tunali12,Narayanan15,Vazquez14}.
Due to the secrecy condition, 
we cannot realize this task by their simple application.
In the case of $c=2$,
since the modulo sums give the information of the other player,
the protocol enables the two players to exchange their messages without information leakage to Receiver \cite{Ren,He1,He2,Vatedka,Zewail,KWV}
when Receiver broadcasts the modulo sum to both players.
Application of such a protocol is discussed in several network models \cite{H-ITW}.
Our protocol can be regarded as an multi-player extension with a generic setting
of the protocol given in \cite{Ren,He1,He2,Vatedka,Zewail,KWV}.

Further, when we can realize secure transmission of modulo sum with $k$ transmitter,
multiplying $-1$ to the receiver's variable,
we can realize secure modulo zero-sum randomness among $k+1$ players,
that is the $k+1$ random variables with zero-sum condition and certain security conditions \cite[Section 2]{SMZS}\cite[Section II]{SMZS2}.
As is explained in \cite{SMZS,SMZS2},
secure modulo zero-sum randomness
can be regarded as a cryptographic resource for several cryptographic tasks including
secure multi-party computation of homomorphic functions,
secret sharing without secure communication channel,
and a cryptographic task, multi-party anonymous authentication.
In this sense, secure transmission of modulo sum is 
needed in the viewpoint of cryptography.

In this paper, 
we formulate the secure modulo sum capacity 
as the maximum of the secure transmission of modulo sum via multiple access channel.
When the multiple access channel satisfies a symmetric condition,
the secure modulo sum capacity equals the capacity of a certain single access channel.
In a general setting,
we derive the lower bound of the secure modulo sum capacity. 
Further, we give several realistic examples by using Gaussian MAC, which can be realized in wireless communication.

The remaining part of this paper is organized as follows.
Section II prepares notations used in this paper.
Section III defines the secure modulo sum capacity, and 
gives two theorems, which discusses the symmetric case and the general case.  
Section IV gives 
the proof for the semi-symmetric case.  
Section V discusses the Gaussian MAC.

\section{Notations}
In this section, we prepare notations and information quantities used in this paper.
Given a joint distribution channel $P_{Y,Z_1, Z_2}$ over the product system of 
a finite discrete set ${\cal Z}_1\times {\cal Z}_2$ and a continuous set ${\cal Y}$,
we denote the conditional probability density function of 
$P_{Y|Z_1,Z_2}$ by $p_{Y|Z_1,Z_2}(y|z_1,z_2) $.
Then, we define the conditional distribution $P_{Y| Z_2}$ 
over a continuous set ${\cal Y}$ conditioned in the discrete set ${\cal Z}_2$
by the conditional probability density function
$p_{Y|Z_2}(y|z_2):= \sum_{z_2\in {\cal Z}_2} P_{Z_2}(z_2) p_{Y|Z_1,Z_2}(y|z_1,z_2)$.
Then, we define two types of the R\'{e}nyi conditional mutual information
$I_{1+s}( Y;Z_1|Z_2)$
\begin{align}
&s I_{1+s}( Y;Z_1|Z_2)\nonumber \\
:=&
\log 
\sum_{z_1,z_2} 
P_{Z_1,Z_2}(z_1,z_2)
\int_{{\cal Y}}  
\frac{p_{Y|Z_1,Z_2}(y|z_1,z_2)^{1+s}} 
{p_{Y|Z_2}(y|z_2)^{s}} dy \\
& \frac{s}{1+s} I_{1+s}^{\downarrow}( Y;Z_1|Z_2)\nonumber \\
:=&
\log 
\sum_{z_2}
P_{Z_2}(z_2)
\int_{{\cal Y}}  
\Big(
\sum_{z_1}
P_{Z_1|Z_2}(z_1|z_2)
p_{Y|Z_1,Z_2}(y|z_1,z_2)^{1+s}
\Big)^{\frac{1}{1+s}}
dy 
\end{align}
for $s>0$.
Since $\lim_{s \to 0}s I_{1+s}( Y;Z_1|Z_2)=
\lim_{s \to 0}s I_{1+s}^{\downarrow}( Y;Z_1|Z_2)=0$,
taking the limit $s \to 0$, we have
\begin{align}
\lim_{s \to 0}\frac{s I_{1+s}( Y;Z_1|Z_2)}{s} =
\lim_{s \to 0}\frac{s I_{1+s}^{\downarrow}( Y;Z_1|Z_2)}{s} =
I( Y;Z_1|Z_2),
\end{align}
where 
$I( Y;Z_1|Z_2)$ expresses the conditional mutual information. 
Also, we have
\begin{align}
 I_{1+s}^{\downarrow}( Y;Z_1|Z_2)\le 
I_{1+s}( Y;Z_1|Z_2) \le  I_{\frac{1}{1-s}}^{\downarrow}( Y;Z_1|Z_2).\Label{HY1}
\end{align}
The concavity of the function $x \mapsto x^{\frac{1}{1+s}}$ yields
\begin{align}
e^{\frac{s}{1+s} I_{1+s}^{\downarrow}( Y;Z_1|Z_2,Z_3)}
\le e^{\frac{s}{1+s} I_{1+s}^{\downarrow}( Y;Z_1,Z_2|Z_3)}.\Label{HY2}
\end{align}

Given a channel $P_{Y|Z_1, Z_2,Z_3}$ from the finite discrete set ${\cal Z}_1\times {\cal Z}_2\times {\cal Z}_3$
to a continuous set ${\cal Y}$,
when the random variables $Z_1,Z_2,Z_3$ are generated subject to the uniform distributions,
we have a joint distribution among $Y,Z_1,Z_2$.
In this case, we denote the Renyi conditional mutual information and the conditional mutual information
by $I_{1+s}( Y;Z_1|Z_2)[P_{Y|Z_1,Z_2,Z_3}]$ and $I( Y;Z_1|Z_2)[P_{Y|Z_1,Z_2,Z_3}]$,
respectively.

Further, 
we denote the vector $(x_i)_{i \in {\cal I}}$ by $\vec{x}_{{\cal I}}$.
When each element $x_i$ belongs to the same vector space, 
we use the notation
\begin{align}
x_{{\cal I}}:=\oplus_{i \in {\cal I}} x_i.
\end{align}

\section{Secure Modulo Sum Capacity}
Consider $c$ players and Receiver.
Player $A_i$ has a random variable $M_i \in \FF_q^k$ and sends it to Receiver,
where $M_i$ is independently subject to the uniform distribution.
The task of Receiver is computing the modulo sum $M_{[c]} \in \FF_q^k$.
Also, it is required that Receiver obtains no other partial information of $\{M_i\}_{i=1}^c$.
That is, when Receiver's information is denoted by $\bY \in {\cal Y}^n$,
we impose the following condition 
\begin{align}
I( \bY; \vec{M}_{{\cal J}} ) \cong 0
\end{align}
for any non-empty set ${\cal J} \subsetneqq [c]$,
where $\vec{M}_{{\cal J}}:=(M_i)_{i \in {\cal J}}$.
We call this task secure transmission of common reference string.

To realize this task, we employ a multiple access channel (MAC) $W$
with $c$ input alphabets ${\cal X}_1, \ldots, {\cal X}_c$ 
and an output alphabet ${\cal Y}$, which might be a continuous set.
Here, the $i$-th input alphabet ${\cal X}_i$ is under the control of Player $A_i$.
That is, given an element $ (x_1, \ldots, x_c)\in 
{\cal X}_1\times \ldots\times {\cal X}_c$,
we have the output distribution $W_{(x_1, \ldots, x_c)}$ on ${\cal Y}$.
When Receiver is required to compute the modulo sum of elements of $\FF_q^{k_n}$ 
with the above security condition,
we employ $n$ uses of the multiple access channel $W$.
Here, we use the bold font like $\bY,\bx_j,\bX_j,\vec{\bY}_{{\cal J}}$ 
to express $n$ symbols while
the italic font like $Y,x_j,X_j,\vec{Y}_{{\cal J}}$ expresses single symbol.
Receiver's variable $\bY \in {\cal Y}^n$ obeys the output distribution of the channel $W^{(n)}$, 
which is defined as the distributions 
$W_{(\bx_1, \ldots, \bx_c)}^{(n)}:=W_{(x_{1,1}, \ldots, x_{c,1})}
\times \cdots \times W_{(x_{1,n}, \ldots, x_{c,n})}$ 
on ${\cal Y}^{n}$
for $ (\bx_1, \ldots, \bx_c)\in {\cal X}_1^n\times \ldots\times {\cal X}_c^n$.

The encoder is given as a set of stochastic maps $\Phi_{i,e}^{(n)} $ from $\FF_q^{k_n}$ to ${\cal X}_i^n$ with $i=1, \ldots,c$.
The decoder is given as a map $\Phi_{d}^{(n)}  $ from  ${\cal Y}^n $
to $\FF_q^{k_n}$.
The tuple $((\Phi_{i,e}^{(n)})_{i=1}^c,\Phi_{d}^{(n)})
$ is simplified to $\Phi^{(n)}$ and is called a code.
The performance of code $\Phi^{(n)}$ is given by the following two values.
One is the decoding error probability
\begin{align}
\epsilon(\Phi^{(n)}) 
:=\frac{1}{q^{c k_n}}
\sum_{
\substack{
m_1,\ldots,m_c,m'\\ 
m' \neq  \oplus_{j=1}^c m_j}} 
\mathbb{E}_{(\Phi_{i,e}^{(n)})_{i=1,\ldots, c}}
\Big[W^{(n)}_{(\Phi_{i,e}^{(n)} (m_i))_{i=1,\ldots, c}}
( (\Phi_{d}^{(n)})^{-1}(m') ) \Big],
\end{align}
where the sum is taken with respect to 
$m_1,\ldots,m_c,m' \in \FF_q^{k_n} $ 
with the condition $m' \neq  \oplus_{j=1}^c m_j$, and
$\mathbb{E}_{(\Phi_{i,e}^{(n)})_{i=1,\ldots, c}}$ expresses the expectation 
with respect to the random variables $(\Phi_{i,e}^{(n)})_{i=1,\ldots, c}$.
The other is the leaked information
\begin{align}
I_{{\cal J}} ( \Phi^{(n)}):=
I(\bY; \vec{M_{\cal J}} )
\end{align}
for a non-empty set ${\cal J} \subsetneqq [ c ]$. 
The transmission rate is given as $ \frac{k_n\log q}{n}$.

When a sequence of codes $\{\Phi^{(n)}\}$ satisfies the conditions
\begin{align}
\epsilon(\Phi^{(n)}) \to 0
\end{align}
and 
\begin{align}
I_{\cal J}( \Phi^{(n)}) \to 0 
\end{align}
for a non-empty set ${\cal J} \subsetneqq [ c ]$,
the limit 
$ \limsup_{n\to \infty}  \frac{k_n\log q}{n}$
is called an achievable rate.
The {\it secure modulo sum capacity} $C_{\SMS}(W)$ 
is defined as the supremum of achievable rates
with respect to the choice of the sequence codes as well as the prime power $q$.

Now, we assume that ${\cal X}_i$ is an $l$-dimensional vector space over a finite field $\FF_q$.
Also, ${\cal Y}$ is a symmetric space for $\FF_q$, i.e., 
for an element $x \in \FF_q^l$, there is a function $f_x$ on ${\cal Y}$ 
such that $ f_{x_1} \circ f_{x_2}=f_{x_1 \oplus x_2}$.
Here, the collection of functions $f=(f_x)$ is called an action of $\FF_q^l$ 
on ${\cal Y}$.  
A multiple access channel $W$ is called {\it symmetric}
when the relation 
\begin{align}
W_{x_1,\ldots, x_c }(B_{f,x})
=W_{x_1\oplus x,\ldots, x_c }(B)
=\cdots
=W_{x_1,\ldots, x_c\oplus x }(B)
\end{align}
holds for a measurable subset $B \subset {\cal Y}$, 
where $B_{f,x}:= \{ f_x(y) \}_{y \in B} $.

For a symmetric multiple access channel $W$, 
we have $W_{x_1,\ldots,x_c}=W_{x_1+\ldots+x_c,0,\ldots,0}$, and
define 
the symmetric single access channel $W_S$ as
\begin{align}
W_{S,x}:=W_{x,0,\ldots,0},
\end{align}
which includes the channel \eqref{MYU}.

\begin{theorem}
For a symmetric multiple access channel $W$, we have
\begin{align}
C_{\SMS}(W)=C(W_S),
\end{align}
where $C(W_S)$ is the channel capacity of the channel $W_S$.
\end{theorem}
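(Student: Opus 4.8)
The plan is to prove the two inequalities $C_{\SMS}(W)\le C(W_S)$ and $C_{\SMS}(W)\ge C(W_S)$ separately, after first using the symmetry hypothesis to collapse the MAC onto the single channel $W_S$. The defining relation of symmetry gives, for every measurable $B$ and every $x$, that $W_{x_1+x,x_2,\ldots,x_c}(B)=W_{x_1,x_2+x,\ldots,x_c}(B)$, so the sum may be transferred freely between input coordinates; iterating, $W_{x_1,\ldots,x_c}=W_{S,\,x_{[c]}}$ with $x_{[c]}=\sum_{i=1}^c x_i$, and by tensorization $W^{(n)}_{(\bx_1,\ldots,\bx_c)}=W_S^{(n)}$ evaluated at the single input $\bx_{[c]}$. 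Hence the receiver's observation $\bY$ has the distribution obtained by passing the sum of the transmitted codewords through $W_S^{(n)}$.

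For the converse, fix any code $\Phi^{(n)}$ with $\epsilon(\Phi^{(n)})\to0$, and set $\bX_i=\Phi^{(n)}_{i,e}(M_i)$ and $\bX_{[c]}=\sum_{i=1}^c\bX_i$. Since $M_1,\ldots,M_c$ are independent and uniform, $M_{[c]}$ is uniform on $\FF_q^{k_n}$, so $H(M_{[c]})=k_n\log q$, and the reduction above shows $M_{[c]}-\bX_{[c]}-\bY$ is a Markov chain (conditionally on $\bX_{[c]}$, $\bY$ is independent of the messages). Applying Fano's inequality to the receiver's estimate of $M_{[c]}$, then the data-processing inequality, then the standard product bound for the memoryless channel $W_S^{(n)}$, we obtain $k_n\log q\le I(M_{[c]};\bY)+1+\epsilon_n k_n\log q\le nC(W_S)+1+\epsilon_n k_n\log q$ with $\epsilon_n\to0$, whence $\limsup_n\frac{k_n\log q}{n}\le C(W_S)$; note the secrecy constraint is not used in this direction.

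For the achievability, let every player use the same linear encoder $\Phi^{(n)}_{i,e}=\phi^{(n)}$, where $\phi^{(n)}$ is a linear map from $\FF_q^{k_n}$ to ${\cal X}_i^n$ (identified with $\FF_q^{ln}$) forming a capacity-achieving code for $W_S$; such linear codes exist because $W_S$ is symmetric, its capacity being attained by the uniform input (by symmetry and concavity of the mutual information) and achieved by linear codes. By linearity $\bX_{[c]}=\phi^{(n)}(M_{[c]})$, so the receiver sees $\bY\sim W_S^{(n)}(\cdot\,|\,\phi^{(n)}(M_{[c]}))$ and recovers $M_{[c]}$ with vanishing error at rate $\tfrac{k_n\log q}{n}\to C(W_S)$. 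For secrecy, fix a non-empty ${\cal J}\subsetneqq[c]$ and pick $i_0\in[c]\setminus{\cal J}$: then $M_{[c]}=\sum_{i\in{\cal J}}M_i+\sum_{i\in[c]\setminus{\cal J}}M_i$, where the first term is a function of $\vec{M}_{\cal J}$ and the second is uniform on $\FF_q^{k_n}$ and independent of $\vec{M}_{\cal J}$ (it contains the summand $M_{i_0}$), so $M_{[c]}$ is uniform and independent of $\vec{M}_{\cal J}$; since $\bY$ depends on the messages only through $M_{[c]}$, we get $I(\bY;\vec{M}_{\cal J})=0$ exactly for every such ${\cal J}$. Combining the two directions yields $C_{\SMS}(W)=C(W_S)$.

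The only non-elementary ingredient, and the place where the symmetry hypothesis is genuinely used, is the existence of a capacity-achieving \emph{linear} code for $W_S$ when ${\cal Y}$ may be a continuous alphabet; this is also precisely what makes the secrecy requirement cost nothing in rate. The remaining steps (the collapse of the MAC onto $W_S$, Fano, data processing, and the independence argument for zero leakage) are routine.
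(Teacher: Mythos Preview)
Your proof is correct and follows the same line as the paper's; the paper's own argument is only the two-sentence sketch ``the transmission of modulo sum is the same as the information transmission under the symmetric single access channel $W_S$; due to the symmetric condition, the secrecy is automatically satisfied,'' and you have supplied exactly the details that sketch suppresses. In particular, your collapse $W_{x_1,\ldots,x_c}=W_{S,x_{[c]}}$ from the symmetry relation, the Fano/data-processing converse, and the use of a common linear encoder (mirroring the paper's introductory remark that with the same linear code ``the receiver cannot obtain any information with respect to each sender's message'') are all what the paper has in mind; your isolation of the one nontrivial ingredient---that linear codes attain $C(W_S)$ for symmetric $W_S$ even with continuous output---is apt and is precisely the point the paper takes for granted.
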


Its proof is very simple.
In this case, the transmission of modulo sum is the same as 
the information transmission under 
the symmetric single access channel $W_S$
when the senders employ a common algebraic encoder.
The symmetric condition guarantees the secrecy condition
when the senders employ a common algebraic encoder
because 
the two distributions $W_{x_1,\ldots, x_c }$ and $W_{x_1',\ldots, x_c' }$ cannot be distinguished
when $x_1 \oplus\ldots\oplus x_c=x_1' \oplus\ldots\oplus x_c' $.
Hence, the rate $C(W_S)$ is achieved.
The converse part can be shown as follows.
Consider the case when the messages by Senders $A_{2}, \ldots, A_c$ are fixed.
In this case, the problem is reduced to the message transmission from Sender $A_1$ to Receiver.
Due to the above symmetric condition, the channel with this case from Sender $A_1$ to Receiver
is also $W_S$. Hence, it is impossible to exceed the rate $C(W_S)$.
Hence, the converse part follows.

However, it is not so easy to realize a symmetric multiple access channel $W$.
Here, we remark the relation with the existing papers \cite{GBP1,GBP2}, which address a similar channel.
However, they consider the information leakage to the third party with respect to the modulo sum.
This paper considers the information leakage to the receiver with respect to the message of each player etc.
To address this problem, we 
denote $I(Y;X_{[c]})[W]$ and $ I(Y;X_i|\vec{X}_{{\cal J}_i})[W]$
by the mutual information 
$I(Y;X_{[c]})$ and $ I(Y;X_i|\vec{X}_{{\cal J}_i})$
when $X_1, \ldots, X_c$ is subject to the uniform distribution
and $Y$ is their output of the multiple access channel $W$.
We define $ I(Y; (X_j\ominus X_{j'})_{j,j'\in {\cal J}^c  }|\vec{X}_{{\cal J}_i  })[W ]$
and $I(Y; (X_j\ominus X_{j'})_{j,j'\in {\cal J}^c  }|\vec{X}_{{\cal J}_i  })[W ]$
in the same way, where $\ominus$ expresses the minus in the sense of the finite field $\FF_q$.
We prepare notations $ I(W):=I(Y;X_{[c]})[W]$,
${\cal J}_j:= \{ i \in {\cal J}| i\neq j\}$,
and $\vec{X}_{{\cal J}}:= (X_i)_{i \in {\cal J}}$.

For the general case, instead of Theorem 1, we have the following theorem.
\begin{theorem}\Label{T-2}
Assume that $ {\cal X}_i$ is an $l$-dimensional vector space over $\FF_q$ as ${\cal X}$.
When a multiple access channel $W$ satisfies
\begin{align}
 I(W) &>   I(Y;X_i|\vec{X}_{{\cal J}_i})[W] \Label{IN1}\\
 I(W)(1+|{\cal J}^c|) &>  I(Y;X_i \vec{X}_{{\cal J}^c  }|\vec{X}_{{\cal J}_i  })[W ]\Label{IN2}\\
 I(W)|{\cal J}^c| &>  I(Y; (X_j\ominus X_{j'})_{j,j'\in {\cal J}^c  }|\vec{X}_{{\cal J}_i  })[W ]\Label{IN3}
\end{align}
for ${\cal J} \subsetneqq [c]$ and $i \in {\cal J}$,
we have
\begin{align}
C_{\SMS}(W)
\ge  
\min_{i\in {\cal J} \subsetneqq [c]} 
\min \Big(&
I(W) -  I(Y;X_i|\vec{X}_{{\cal J}_i})[W],
\nonumber \\
& (|{\cal J}^c| +1)I(W) -I (Y; X_i ,\vec{X}_{{\cal J}^c} |\vec{X}_{{\cal J}_i} )[W] ,
\nonumber \\
& |{\cal J}^c| I(W) -I (Y; X_i , 
(X_j\ominus X_{j'})_{j,j' \in {\cal J}^c}| \vec{X}_{{\cal J}_i} ) [W]
\Big).
\end{align}

Under the notation $I(Y;X_i|\vec{X}_{{\cal J}_i})[W]$,
the symbols $W$, $Y$, $X_i$, $\vec{X}_{{\cal J}_i}$, and  
$\vec{X}_{[c] \setminus{\cal J}}$ correspond to 
$P_{Y|Z_1,Z_2,Z_3}$, $Y$, $Z_1$, $Z_2$, and $Z_3$
in the definition of $I( Y;Z_1|Z_2)[P_{Y|Z_1,Z_2,Z_3}]$,
respectively.
\end{theorem}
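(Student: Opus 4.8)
The plan is a random-coding proof: a common random linear code at the $c$ senders is responsible for decoding the modulo sum, while a privacy-amplification analysis phrased through the R\'{e}nyi conditional mutual informations of Section II handles the secrecy. I would first reduce the secrecy condition $I_{\cal J}(\Phi^{(n)})=I(\bY;\vec{M}_{\cal J})\to 0$ to a family of single-message statements: for any enumeration ${\cal J}=\{j_1,\dots,j_t\}$ of a nonempty ${\cal J}\subsetneqq[c]$, the chain rule gives
\begin{align}
I(\bY;\vec{M}_{\cal J})=\sum_{s=1}^{t}I\big(\bY;M_{j_s}\mid M_{j_1},\dots,M_{j_{s-1}}\big),
\end{align}
and each summand has the form $I(\bY;M_i\mid \vec{M}_{{\cal J}_i})$ with $i\in{\cal J}\subsetneqq[c]$. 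Hence it suffices to establish $\epsilon(\Phi^{(n)})\to0$ together with $I(\bY;M_i\mid\vec{M}_{{\cal J}_i})\to0$ for every such pair $(i,{\cal J})$; this is the origin of the outer $\min_{i\in{\cal J}\subsetneqq[c]}$ in the statement.

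For the code, I would use a common linear structure with fresh private randomness at each sender: fixing $k_n$ and an auxiliary randomness rate $\rho$, let $\Phi_{i,e}^{(n)}(M_i)$ be a random linear image of $(M_i,L_i)$ with $L_i$ independent and uniform. Then the sum $\bX_{[c]}$ is the same linear image of $(M_{[c]},L_{[c]})$, and the single-letter channel induced on $X_{[c]}$ when all inputs are uniform is exactly the channel whose mutual information is $I(W)=I(Y;X_{[c]})[W]$. The ordinary memoryless random-coding and maximum-likelihood-decoding argument then produces a decoder recovering $(M_{[c]},L_{[c]})$, hence $M_{[c]}$, with $\epsilon(\Phi^{(n)})\to0$ whenever $R+\rho<I(W)$, where $R:=k_n\log q/n$; in particular $\rho$ may be taken arbitrarily close to $I(W)-R$. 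This is why $I(W)$ appears in all three arguments of the bound.

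The secrecy step is the heart of the argument. Fixing $i\in{\cal J}\subsetneqq[c]$ and handing the receiver the side information $\vec{\bX}_{{\cal J}_i}$ (which only strengthens it), I would estimate $I(\bY;M_i\mid\vec{\bX}_{{\cal J}_i})$ by a leftover-hash-type inequality written through $I_{1+s}$: for small $s>0$,
\begin{align}
I(\bY;M_i\mid\vec{\bX}_{{\cal J}_i})\le\frac{1}{s}\exp\!\big(s\,n\,(L_{1+s}-\rho_{\mathrm{res}})\big),
\end{align}
where $L_{1+s}$ is a R\'{e}nyi conditional-mutual-information leakage rate tending to the corresponding conditional mutual information as $s\to0$, and $\rho_{\mathrm{res}}$ is the rate of sender randomness still concealed from the receiver. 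Since decoding only ties up $\rho<I(W)-R$ worth of per-sender rate, there are three natural ways to package the concealed randomness against which $M_i$ can be masked, and they produce exactly the three arguments of the inner $\min$: (i) treating $\vec{\bX}_{{\cal J}^c}$ as pure channel noise gives $\rho_{\mathrm{res}}=\rho$ and $L_{1+s}\to I(Y;X_i\mid\vec{X}_{{\cal J}_i})[W]$, hence, letting $\rho\uparrow I(W)-R$, the requirement $R<I(W)-I(Y;X_i\mid\vec{X}_{{\cal J}_i})[W]$; (ii) bundling $L_i$ with the whole hidden block of the $|{\cal J}^c|$ outside senders gives $\rho_{\mathrm{res}}\to(|{\cal J}^c|+1)I(W)-R$ and $L_{1+s}\to I(Y;X_i,\vec{X}_{{\cal J}^c}\mid\vec{X}_{{\cal J}_i})[W]$, hence $R<(|{\cal J}^c|+1)I(W)-I(Y;X_i,\vec{X}_{{\cal J}^c}\mid\vec{X}_{{\cal J}_i})[W]$; (iii) bundling $L_i$ only with the $|{\cal J}^c|-1$ independent differences inside $\vec{\bX}_{{\cal J}^c}$ --- the part of $\vec{\bX}_{{\cal J}^c}$ not already pinned down by the decoded sum $\sum_{j\in{\cal J}^c}\bX_j$ --- gives $\rho_{\mathrm{res}}\to|{\cal J}^c|I(W)-R$ and $L_{1+s}\to I(Y;X_i,(X_j-X_{j'})_{j,j'\in{\cal J}^c}\mid\vec{X}_{{\cal J}_i})[W]$, hence the third requirement. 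A union bound over the three corresponding bad events in the random code then bounds $I(\bY;M_i\mid\vec{\bX}_{{\cal J}_i})$ by a sum of three such exponentials, so all three exponents must be made positive simultaneously; the three strict inequalities of the hypothesis are the statements that this is possible, and taking $R$ below the displayed triple $\min$, and then below the minimum over the finitely many pairs $(i,{\cal J})$, forces $\epsilon(\Phi^{(n)})\to0$ and every $I_{\cal J}(\Phi^{(n)})\to0$. The inequalities \eqref{HY1} and \eqref{HY2} are used to pass between the two R\'{e}nyi variants and to insert the conditioning on $\vec{\bX}_{{\cal J}_i}$ when assembling the exponents, and one lets $s\to0$ after $n\to\infty$.

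I expect the main obstacle to be carrying out the secrecy estimate rigorously with one fixed code. One must verify that the random linear code simultaneously decodes $M_{[c]}$ correctly, makes the effective noise of regime (i) genuinely realize the channel quantity $I_{1+s}(Y;X_i\mid\vec{X}_{{\cal J}_i})[W]$ rather than something larger (the individual senders' inputs are not uniform over their full alphabets under a linear construction), and yields vanishing leakage for all of the finitely many pairs $(i,{\cal J})$ at once. The genuinely delicate point is regime (iii): one must perform the change of variables $\vec{\bX}_{{\cal J}^c}\mapsto\big(\sum_{j\in{\cal J}^c}\bX_j,(\bX_j-\bX_{j'})_{j,j'\in{\cal J}^c}\big)$, check that decoding the sum consumes only its first coordinate, and balance the entropy of the difference block against $I_{1+s}(Y;X_i,(X_j-X_{j'})_{j,j'\in{\cal J}^c}\mid\vec{X}_{{\cal J}_i})[W]$ via the exponent-additivity of the R\'{e}nyi mutual information. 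This bookkeeping is the technical core; by contrast the decoding-error analysis is the standard memoryless computation, and the symmetric special case collapses, as in Theorem~1, to ordinary channel coding.
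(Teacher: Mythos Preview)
Your overall strategy---random linear encoding with per-sender scramble variables $L_i$, chain-rule reduction to single-message leakages $I(\bY;M_i\mid\vec{M}_{{\cal J}_i})$, upgrading the conditioning to $\vec{\bX}_{{\cal J}_i}$, and a R\'{e}nyi leftover-hash bound yielding three exponential terms---is the paper's, and your rate bookkeeping for the three regimes recovers the correct thresholds. But the construction as you state it has a genuine gap, precisely the one you flag as ``the main obstacle'': with $\bX_i$ a linear image of $(M_i,L_i)$ alone, each input is uniform only over a $(k_n+k_n')$-dimensional subspace of ${\cal X}^n$, not over ${\cal X}^n$ itself. The single-letter quantities $I(W)=I(Y;X_{[c]})[W]$ and $I(Y;X_i\mid\vec{X}_{{\cal J}_i})[W]$ in the statement are defined with \emph{uniform} inputs, so neither your decoding claim (the induced channel on $\bX_{[c]}$ is not the memoryless one with information $I(W)$, because the conditional law of $(\bX_1,\dots,\bX_c)$ given $\bX_{[c]}$ lives on the code subspace) nor your secrecy exponents connect to them without an additional device. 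The paper fixes this by adjoining independent uniform coset shifts $E_i\in\FF_q^{nl-(k_n+k_n')}$, shared with Receiver, and setting $\bX_i=G_1(G_3(M_i,L_i),E_i)$ with $G_1$ invertible; this forces every $\bX_i$ to be uniform on ${\cal X}^n$ regardless of the code, so the channel $\bX_{[c]}\to\bY$ really is the memoryless channel with information $I(W)$ and the R\'{e}nyi leakages tensorize. Your sketch does not supply this ingredient, and without it the argument does not close.

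A second, smaller divergence concerns how the three terms arise. They are not produced by three separate privacy-amplification runs against three ``packages'' of concealed randomness. The paper applies the leftover-hash lemma once---via the universal$_2$ map $G_4$ extracting $M_i$ from $F_i=G_3(M_i,L_i)$---to reach a bound in $I_{1/(1-s)}^{\downarrow}(\bY;\bX_i\mid G_1,\vec{E}_{{\cal J}^c},\vec{\bX}_{{\cal J}_i})$. The three pieces then come from a \emph{second} average, over the random code $G_1$: conditioned on $\vec{E}_{{\cal J}^c}$, each $\bX_j$ with $j\in{\cal J}^c$ ranges over a coset of the code, and the coset sum is split according to whether a second point $\vec{\bx}_{{\cal J}^c}'$ equals $\vec{\bx}_{{\cal J}^c}$, differs in its pairwise differences $(\bx_j-\bx_{j'})$, or preserves them; the universal$_2$ property of $G_2$ controls the last two cases. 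Your ``union bound over three bad events'' has the right final shape, but the mechanism is this coset decomposition rather than three independent hash arguments, and it is exactly here that the shared coset shifts $\vec{E}_{{\cal J}^c}$ are indispensable.
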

Due to Lemma 2 of Appendix, the maximum 
$\max_{j\in {\cal J} \subsetneqq [c]}  I(Y;X_j|\vec{X}_{{\cal J}_j})[W]$
is realized when $| [c]\setminus {\cal J}|=1$.
The mutual information $I(W)$ is a generalization of the achievable rate with $\FF_2$ given in \cite{Ullah}.

To hide message $M_i$, other messages work as scramble variables for $M_i$. 
If the variables of other senders are subject to the uniform distribution on ${\cal X}^n$, 
Condition \eqref{IN1} is sufficient for the secrecy. 
However, the variables of other senders are not subject to the above uniform distribution on ${\cal X}^n$ in our code. 
Moreover, when we fix the message $M_i$ and we take the average for other messages, 
the output cannot be regarded as the output under the above uniform distribution in general because the message size is not sufficiently large. 
Hence, we need to care the deviation of the inputs of other senders. 
Conditions \eqref{IN2} and \eqref{IN3} are the additional conditions to cover the secrecy even with such deviation.

\section{Proof of Theorem \ref{T-2}}\Label{S4}
\noindent{\bf Step (1):}\quad 
We construct our code randomly as follows.
As the first step, we choose a sequence of integers $k_n$ such that
the minimum among the differences between LHS and RHS in inequalities 
 \eqref{IN1} -- \eqref{IN3} equals $\epsilon+\lim_{n \to \infty} \frac{k_n}{n} \log q $,
 where $\epsilon>0$ is sufficiently small real number.
Then, we choose a sequence of integers $k_n'$ such that
$I(W)-\frac{\epsilon}{2} =
\lim_{n \to \infty}\frac{k_n+k_n'}{n}\log q$.
Then, we have the following inequalities;
\begin{align}
\lim_{n \to \infty}\frac{k_n+k_n'}{n}\log q &<I(W) \Label{8-20-4}\\
\lim_{n \to \infty} \frac{k_n'+ (k_n+k_n')|{\cal J}^c|}{n} \log q
&> I (Y; X_i ,\vec{X}_{{\cal J}^c} |\vec{X}_{{\cal J}_i} )[W]  \Label{8-20-5}\\
\lim_{n \to \infty} \frac{k_n'}{n} \log q
&> I (Y; X_i | \vec{X}_{{\cal J}_i} ) [W] \Label{8-20-6},
\end{align}
and
\begin{align}
\lim_{n \to \infty} \frac{k_n'+ (k_n+k_n')(|{\cal J}^c|-1)}{n} \log q 
>  I (Y; X_i , (X_j\ominus X_{j'})_{j,j' \in {\cal J}^c}| \vec{X}_{{\cal J}_i} ) [W]
 \Label{8-20-7}
\end{align}
for ${\cal J} \subsetneqq [c]$ and $i \in {\cal J}$.
Let $L_i$ be
a scramble random variable with the dimension $k_n'$ for $i=1, \ldots, c$,
which is assumed to be subject to the uniform distribution.

Then, 
we randomly choose a pair of an invertible linear map $G_1$ 
from $ \FF_q^{k_n+k_n'} \times \FF_q^{nl -(k_n+k_n')}$ to ${\cal X}^n $
and a linear map $G_2$ from ${\cal X}^n $ to $ \FF_q^{nl -(k_n+k_n')} $
such that $G_2 \circ G_1 (F, E)=E$ and the universal2 condition
\begin{align}
{\rm Pr} \{ G_2 (x)=G_2 (x') \} &\le \frac{1}{q^{nl -(k_n+k_n')}} \Label{HY8}
\end{align}
holds for $x\neq x' \in {\cal X}^n$.
Also, we randomly choose a pair of an invertible linear map $G_3$ 
from $ \FF_q^{k_n} \times \FF_q^{k_n'}$ to $\FF_q^{k_n+k_n'}$
and a linear map $G_4$ from $\FF_q^{k_n+k_n'}$ to $ \FF_q^{k_n}$
such that $G_4 \circ G_3 (M, L)=M$ and
the universal2 condition
\begin{align}
{\rm Pr} \{ G_4 (f)=G_4 (f') \} &\le \frac{1}{q^{k_n'}}\Label{HY9}
\end{align}
holds for $f\neq f' \in \FF_q^{k_n+k_n'}$.
Hence, $G_1$ and $G_3$ uniquely determine $G_2$ and $G_4$, respectively.

We also randomly and independently choose the random variable $E_i$ subject to the uniform distribution on $\FF_q^{nl -(k_n+k_n')}$ for $i=1, \ldots, c$.
Hence, $G_1$, $G_3$, and $E_1, \ldots, E_c$ are priorly shared among players and Receiver.
Then, we consider the following protocol.
That is, using the random variable $G_1$, $G_3$, and $E_i$,
we define the encoder $ \Phi_{i,e}^{(n)}$ as follows.
For a given $M_i$, Player $A_i$ randomly chooses the scramble random variable $L_i$.
That is, each player $A_i$ transmits $G_1(G_3(L_i,M_i),E_i)$.
In this case, since $G_1$, $G_3$, and $E_i$ are shared among Senders and Receiver, 
the encoder of Sender $A_i$ is given as an affine map
$(L_i,M_i) \mapsto G_1(G_3(L_i,M_i),E_i)$.
The detail analysis for affine maps are explained in Appendix A of \cite{HaVa}. 
Here, the variables $G_1,G_3, \vec{M},\vec{L},\vec{E}$ are assumed to be independent of each other.
Then, we have the relation
\begin{align}
F_i= G_3(L_i,M_i), \quad 
\bX_i= G_1(F_i,E_i).
\end{align}
Since $\vec{E}$ is subject to the uniform distribution on 
$ \FF_q^{c(nl -(k_n+k_n'))}$, 
even when the maps $G_1$ and $G_3$ are fixed,
$\vec{\bX}$ is subject to the uniform distribution on 
$ {\cal X}^{cn}$.
Receiver receives the random variable $\bY \in {\cal Y}^n$ that  depends only on $\vec{\bX}$.
That is, we have the Markov chain $(G_1,G_3, \vec{M},\vec{L},\vec{E})- \vec{\bX} -\bY$.
Due to \eqref{8-20-4}, Receiver can decode 
$\oplus_{j=1}^c F_j=G_3(L_{[c]},M_{[c]})$, i.e., $M_{[c]}$ and $L_{[c]}$
from $\bY$ by using the knowledge $E_{[c]}=\oplus_{i=1}^c E_i$ for the coset.
The proof is similar to the same way as Appendix B of \cite{HaVa} and is different from \cite{Lim}.
The detail is explained in Appendix \ref{AC}.

\noindent{\bf Step (2):}\quad 
We divide the leaked information to several parts that can be bounded.
Given ${\cal J} \subset [ c]$, we discuss the leaked information
for $M_{\cal J}:= \sum_{i \in {\cal J}} M_i $.
For $\vec{M}_{\cal J}:= (M_i)_{i \in {\cal J}}$, using 
$\tilde{{\cal J}}_i:= \{ j \in {\cal J}| j<i\}$,
we have
\begin{align}
&I( \bY;\vec{M}_{\cal J} |G_1,G_3,\vec{E}) \nonumber\\
=&\sum_{i \in {\cal J}}
I(\bY;M_i | \vec{M}_{\tilde{{\cal J}}_i} ,G_1,G_3,\vec{E})
\nonumber\\
\stackrel{(a)}{\le}& 
\sum_{i \in {\cal J}}
I(\bY;M_i | \vec{M}_{{\cal J}_i} ,\vec{L}_{{\cal J}_i} ,G_1,G_3,\vec{E}) \nonumber\\
=&\sum_{i \in {\cal J}}
I(\bY;M_i  | \vec{M}_{{\cal J}_i} ,\vec{L}_{{\cal J}_i},G_1,G_3,\vec{E}_{{\cal J}_i},
\vec{E}_{{\cal J}_i^c} ) \nonumber\\
=& \sum_{i \in {\cal J}}
I(\bY;M_i | \vec{\bX}_{{\cal J}_i} ,G_1,G_3,\vec{E}_{{\cal J}_i^c} )\nonumber \\
=&\sum_{i \in {\cal J}}
I(\bY; M_i  | \vec{E}_{{\cal J}^c} ,\vec{\bX}_{{\cal J}_i} ,G_1,G_3,E_i) ,
\Label{8-20-3}
\end{align}
where $(a)$ follows from Eq. \eqref{E2} in Appendix.

\noindent{\bf Step (3):}\quad 
We focus on the randomness of the choice of $G_3$. 
Then, for $s \in [0,\frac{1}{2}]$, we have
\begin{align}
& I(\bY; M_i | G_3, G_1, E_i, \vec{E}_{{\cal J}^c} ,\vec{\bX}_{{\cal J}_i} ) 
\nonumber\\
\stackrel{(a)}{\le} & 
 q^{-s k_n'}
e^{s I_{1+s}(\bY; F_i | G_1, E_i, \vec{E}_{{\cal J}^c} ,\vec{\bX}_{{\cal J}_i} ) } 
\nonumber\\
\stackrel{(b)}{\le} & 
 q^{-s k_n'}
e^{s I_{\frac{1}{1-s}}^{\downarrow}
(\bY; F_i | G_1, E_i, \vec{E}_{{\cal J}^c} ,\vec{\bX}_{{\cal J}_i} ) } 
\nonumber\\
\stackrel{(c)}{\le} & 
 q^{-s k_n'}
e^{s I_{\frac{1}{1-s}}^{\downarrow}
(\bY; F_i, E_i | G_1, \vec{E}_{{\cal J}^c} ,\vec{\bX}_{{\cal J}_i} ) } 
\nonumber\\
\stackrel{(d)}{=} &  
q^{-s k_n'}
e^{s I_{\frac{1}{1-s}}^{\downarrow}
(\bY; \bX_i | G_1, \vec{E}_{{\cal J}^c} ,\vec{\bX}_{{\cal J}_i} ) } 
\nonumber\\
\stackrel{(e)}{\le} & 
q^{-s k_n'-s (k_n+k_n')|{\cal J}^c|}
e^{s I_{\frac{1}{1-s}}^{\downarrow}(\bY; \bX_i ,\vec{\bX}_{{\cal J}^c} |\vec{\bX}_{{\cal J}_i} ) } 
+q^{-s k_n'} e^{s I_{\frac{1}{1-s}}^{\downarrow}
(\bY; \bX_i | \vec{\bX}_{{\cal J}_i} ) } 
\nonumber\\
&+q^{-s k_n'-s (k_n+k_n')(|{\cal J}^c|-1)}
 e^{s I_{\frac{1}{1-s}}^{\downarrow}
(\bY; \bX_i , (\bX_j\ominus\bX_{j'})_{j,j' \in {\cal J}^c}| \vec{\bX}_{{\cal J}_i} ) } ,
\Label{HY3}
\end{align}
where $(a)$ follows from Theorem 4 of \cite{Hayashi2011} and \eqref{HY9};
$(b)$ follows from the second inequality in \eqref{HY1};
$(c)$ follows from \eqref{HY2};
$(d)$ follows from the fact that the pair $(E_i,F_i)$ and $\bX_i$
uniquely determine each other;
and
$(e)$ follows from Appendix B.
Due to Conditions \eqref{8-20-5} -- \eqref{8-20-7},
all the terms in \eqref{HY3} go to zero exponentially.
Hence, we obtain Theorem \ref{T-2}.
\endproof

\begin{remark}
Theorem \ref{T-2} cannot be shown by simple application of the result of wire-tap channel \cite{Wyner} as follows.
Consider the secrecy of the message $M_1$ of player $A_1$.
In this case, if other players transmit elements of ${\cal X}^n$ with equal probability,
the channel from player $A_1$ to Receiver 
is given as $n$-fold extension of the channel 
$x \mapsto \frac{1}{q}\sum_{x_2, \ldots, x_c}W_{x,x_2, \ldots, x_c}$, which enables us to directly apply the result of wire-tap channel.
However, other players transmit elements of the image of  $G$, which is a subset of ${\cal X}^n$, with equal probability.
Hence, the channel from player $A_1$ to Receiver does not have the above simple form.
Therefore, we need more careful discussion.
\end{remark}

\begin{remark}
In this proof, Receiver does not use the knowledge $E_1, \ldots, E_c$ except for 
$E_{[c]}:=\oplus_{i=1}^c E_i$.
Hence, there is a possibility that the transmission rate can be improved by using this knowledge.
\end{remark}

\section{Gaussian MAC}

\subsection{Real case}
First, we consider the Gaussian MAC \eqref{YFR}, in which
${\cal Y}$ is $\RR$ and 
${\cal X}_i$ is $\FF_p$.
Using the Gaussian distribution $P_{a}$ with the average $a$ and variance $v$ on $\RR$,
we have the channel as $W_{x_1,\ldots, x_c }:=P_{E(x_1+\ldots+ x_c)} $.
In this case, we have
\begin{align}
&I(W)= 
h( \frac{\sum_{x_1, \ldots,x_c} P_{(x_1+ \ldots+x_c)E}}{p^c})\nonumber \\
&\hspace{8ex} -\sum_{j=0}^p \frac{1}{p} 
h(\sum_{i=1}^c  \frac{\alpha(i,j)}{p^{c-1}} P_{(j+ip)E}   )\\
&\max_{j\in {\cal J} \subsetneqq [c]}  I(Y;X_j|\vec{X}_{{\cal J}_j})[W]
\stackrel{(a)}{=}
 I(Y;X_1|\vec{X}_{\{2, \ldots, c-1\}})[W] \Label{cc} \\
&= h( \frac{\sum_{j,j'=0}^{p-1} P_{jj'E}}{p^2})-h( \frac{\sum_{j=0}^{p-1} P_{jE}}{p}),
\end{align}
where $h$ is the differential entropy
and $\alpha(i,j)$ is the number of elements $(x_1, \ldots,x_c)\in \{0,\ldots,p-1\}^c $ satisfying 
$\sum_{i'=1}^c x_{i'}=j+ip$.
Here, $(a)$ follows the discussion after Theorem 2.
When $p=2$ and $c=3$, our lower bound is
$
h(\frac{P_0+2P_E+2P_{2E}+P_{3E}}{6})
-h(\frac{P_0+2 P_{2E}}{3})
-h(\frac{P_0+2P_E+P_{2E}}{4})
+h(\frac{P_0+P_E}{2})$, which
is numerically calculated as Fig. \ref{F1}.

When $E$ goes to infinity, the distributions $\{P_{iE}\}_{i=0}^{c(p-1)}$ can be distinguished so that
$I(W)$ goes to $\log p$
and $ h( \frac{\sum_{j,j'=0}^{p-1} P_{jj'E}}{p^2})-H( \frac{\sum_{j=0}^{p-1} P_{jE}}{p})$ 
goes to $H(Q_p) -\log p= \sum_{j=-p+1}^{p-1}\frac{|j-p|}{p^2}\log \frac{p}{|j-p|}$,
where the distribution $Q_p$ on $\{-p+1, 2,\ldots, p-1\}$ is defined as $Q_p(j):= \frac{|j-p|}{p^2}$.
Hence, our lower bound of the secure modulo sum capacity goes to 
$\sum_{j=-p+1}^{p-1}\frac{|j-p|}{p^2}\log |j-p|$.
For example, when $ p=2$, it is $\frac{1}{2}\log 2$.

\begin{figure}[t]
\begin{center}
\includegraphics[scale=0.9]{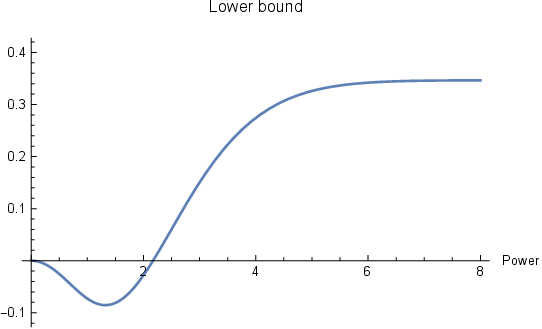}
\end{center}
\caption{Our lower bound of secure modulo sum capacity with $p=2$
when 
three senders send information ($c=3$) and the variance $v$ is $1$.
The base of logarithm is chosen to be $e$.
The horizontal axis expresses the power $E$. 
The vertical axis expresses our lower bond
$h(\frac{P_0+2P_E+2P_{2E}+P_{3E}}{6})
+h(\frac{P_0+P_E}{2})
-h(\frac{P_0+2 P_{2E}}{3})
-h(\frac{P_0+2P_E+P_{2E}}{4})$
This lower bound  approaches $\frac{1}{2}\log 2$ and is positive with $E \ge 2.1715$.
}
\Label{F1}
\end{figure}%

\subsection{Complex case}
Next, we consider the case when ${\cal Y}$ is $\CC$,
and discuss the case with ${\cal X}_i=\FF_7$ as a typical case \cite{Tunali12,Vazquez14}.
Let $P_{a}$ be the Gaussian distribution with the average $a$ and variance $v$ on $\CC$.
We define the map $u$ from $\FF_7$ to $\CC$ as follows.
\begin{align}
u(0)&=0,~ 
u(6)=E,~
u(7)= Ee^{\frac{2\pi }{3} i },~
u(8)= Ee^{\frac{\pi }{3} i },\nonumber\\
u(1)&=Ee^{\frac{4\pi }{3} i },~
u(3)= Ee^{\frac{5\pi }{3} i },~
u(2)= -E.
\end{align}
Then, we define 
$W_{x_1,\ldots, x_c }:=P_{(u(x_1)+\ldots+ u(x_c))E} $.
Hence, we have \eqref{cc}.
Due to the same reason as the above case, 
when $E$ goes to infinity, $I(W)$ goes to $\log 7$
and $ I(Y;X_1|\vec{X}_{\{2, \ldots, c-1\}})[W] $
goes to $\sum_{j=-6}^{6}\frac{|j-7|}{7^2}\log \frac{7}{|j-7|}$,
Hence, our lower bound of the secure modulo sum capacity goes to 
$\sum_{j=-6}^{6}\frac{|j-7|}{7^2}\log |j-7|$.

\section{Conclusion}
To discuss the generation of common reference string via a multiple access channel, we have introduced the secure modulo sum capacity for a multiple access channel.
We have shown that the secure modulo sum capacity equals the channel capacity 
when the multiple access channel satisfies the symmetric condition.
Since the symmetric condition does not hold in a natural setting, we have derived a lower bound for 
the secure modulo sum capacity in a general setting.
We have examined this lower bound under the real and complex Gaussian MAC
with numerical analysis.

\section*{Acknowledgments}
The author is very grateful to Professor \'{A}ngeles V\'{a}zquez-Castro and Professor Takeshi Koshiba
for their helpful discussions.
\if0
The work reported here was supported in part by Guangdong Provincial Key Laboratory (Grant No. 2019B121203002),
the JSPS Grant-in-Aid for Scientific Research 
(A) No.17H01280, (B) No. 16KT0017,
and Kayamori Foundation of Informational Science Advancement.
\fi


\appendices
\section{Conditional mutual information}
We prepare a lemma for conditional mutual information.
\begin{lemma}
\begin{align}
I(A;B|CD)&=
-I(B;C|D)
+I(A;B|D)
+I(B;C|AD) .\Label{E1}
\end{align}
In particular, when $I(B;C|D)=0$, we have
\begin{align}
I(A;B|D)\le I(A;B|CD).\Label{E2}
\end{align}
\end{lemma}
\begin{proof}
This equation can be shown as follows.
\begin{align}
&I(A;B|CD)+I(B;C|D)
=
I(AC;B|D) \nonumber\\
=&I(A;B|D)
+I(B;C|AD).
\end{align}
\end{proof}

\section{Proof of Step $(e)$ in \eqref{HY3}}
Given $\vec{\bx}_{{\cal J}^c} \in {\cal X}^{|{\cal J}^c|}$ and $G_1$,
we introduce the following conditions for 
$\vec{\bx}_{{\cal J}^c}' \in {\cal X}^{|{\cal J}^c|}$.
\begin{description}
\item[(C1)]
$\vec{\bx}_{{\cal J}^c}'\neq \vec{\bx}_{{\cal J}^c}$.
\item[(C2)]
$(G_2(\bx_j'))_{j\in {\cal J}^c}=(G_2(\bx_j))_{j\in {\cal J}^c}$. 
\item[(C3)]
$(\bx_j' \ominus\bx_{j'}')_{j,j'\in {\cal J}^c} \neq (\bx_j \ominus\bx_{j'})_{j,j'\in {\cal J}^c}$.  
\item[(C4)]
$(\bx_j'\ominus\bx_{j'}')_{j,j'\in {\cal J}^c} = (\bx_j\ominus\bx_{j'})_{j,j'\in {\cal J}^c}$.  
\end{description}
In the following, we denote the sum with respect to 
$\vec{\bx}_{{\cal J}^c}' \in {\cal X}^{|{\cal J}^c|}$ 
under the conditions (C1) and (C2) by $\sum_{\vec{\bx}_{{\cal J}^c}':{\rm (C1)(C2)}}$, etc.
Using the condition \eqref{HY8}, we have
\begin{align}
\EE_{G_1}
\sum_{ \vec{\bx}_{{\cal J}^c}':{\rm (C1)(C2)(C3)} }
P_{Y|\bX_i=\bx_i , \vec{\bX}_{{\cal J}^c}=\vec{\bx}_{{\cal J}^c}' ,\vec{\bX}_{{\cal J}_i}}(y)
&\le 
\sum_{ \vec{\bx}_{{\cal J}^c}':{\rm (C1)(C3)} }
q^{(-nl+(k_n+k_n'))|{\cal J}^c|}
P_{Y|\bX_i=\bx_i , \vec{\bX}_{{\cal J}^c}=\vec{\bx}_{{\cal J}^c}' ,\vec{\bX}_{{\cal J}_i}}(y)
\nonumber \\
& \le
\sum_{ \vec{\bx}_{{\cal J}^c}' }
q^{(-nl+(k_n+k_n'))|{\cal J}^c|}
P_{Y|\bX_i=\bx_i , \vec{\bX}_{{\cal J}^c}=\vec{\bx}_{{\cal J}^c}' ,\vec{\bX}_{{\cal J}_i}}(y)
\nonumber \\
& =q^{(k_n+k_n')|{\cal J}^c|}
P_{Y|\bX_i=\bx_i ,\vec{\bX}_{{\cal J}_i}}(y) \Label{HY11}\\
\EE_{G_1}
\sum_{ \vec{\bx}_{{\cal J}^c}':{\rm (C1)(C2)(C4)} }
P_{Y|\bX_i=\bx_i , \vec{\bX}_{{\cal J}^c}=\vec{\bx}_{{\cal J}^c}' ,\vec{\bX}_{{\cal J}_i}}(y)
& \le 
\sum_{ \vec{\bx}_{{\cal J}^c}':{\rm (C1)(C4)} }
q^{-nl+(k_n+k_n')}
P_{Y|\bX_i=\bx_i , \vec{\bX}_{{\cal J}^c}=\vec{\bx}_{{\cal J}^c}' ,\vec{\bX}_{{\cal J}_i}}(y) \nonumber \\
& \le 
\sum_{ \vec{\bx}_{{\cal J}^c}':{\rm (C4) }}
q^{-nl+(k_n+k_n')}
P_{Y|\bX_i=\bx_i , \vec{\bX}_{{\cal J}^c}=\vec{\bx}_{{\cal J}^c}'' ,\vec{\bX}_{{\cal J}_i}}(y) \nonumber \\
&=
q^{(k_n+k_n')}
P_{Y|\bX_i=\bx_i , (\bX_j-\bX_{j'}=\bx_j-\bx_{j'})_{j,j' \in {\cal J}^c} ,\vec{\bX}_{{\cal J}_i}}(y).\Label{HY12}
\end{align}

To show the step $(e)$ in \eqref{HY3}, 
we define the set 
${\cal S}(G_2,\vec{E}_{{\cal J}^c}):=
\{ \vec{\bx}_{{\cal J}^c} \in {\cal X}^{|{\cal J}^c|}| G_2(\bx_j)=E_j$ for 
$ j \in {\cal J}^c\}$.
Using \eqref{HY11} and \eqref{HY12}, for $s \in [0,\frac{1}{2}]$,
we have the following relations, 
where the explanations for steps is explained later.
\begin{align}
& e^{s I_{\frac{1}{1-s}}^{\downarrow}
(\bY; \bX_i | G_1, \vec{E}_{{\cal J}^c} ,\vec{\bX}_{{\cal J}_i} ) } .
\nonumber\\
= &
\EE_{\vec{\bX}_{{\cal J}_i},\vec{E}_{{\cal J}^c},G_1}
\Big(
\int_{{\cal Y}} \sum_{\bx_i}
P_{Y|\bX_i=\bx_i , \vec{E}_{{\cal J}^c} ,G_1, \vec{\bX}_{{\cal J}_i}}(y)^{\frac{1}{1-s}}
dy \Big)^{1-s}
\nonumber \\
= &
\EE_{\vec{\bX}_{{\cal J}_i},\vec{E}_{{\cal J}^c},G_1}
\Big(
\int_{{\cal Y}} \sum_{\bx_i}
\Big(
q^{-(k_n+k_n')|{\cal J}^c|}
\sum_{ \vec{\bx}_{{\cal J}^c} \in {\cal S}(G_2,\vec{E}_{{\cal J}^c}) }
P_{Y|\bX_i=\bx_i , \vec{\bX}_{{\cal J}^c}=\vec{\bx}_{{\cal J}^c}, \vec{\bX}_{{\cal J}_i}}(y)
\Big)^{\frac{1}{1-s}}
dy \Big)^{1-s}
\nonumber \\
\stackrel{(a)}{\le} & 
\EE_{\vec{\bX}_{{\cal J}_i}}
\Big(
\int_{{\cal Y}} \sum_{\bx_i}
\EE_{\vec{E}_{{\cal J}^c},G_1}
\Big(
q^{-(k_n+k_n')|{\cal J}^c|}
\sum_{ \vec{\bx}_{{\cal J}^c} \in {\cal S}(G_2,\vec{E}_{{\cal J}^c}) }
P_{Y|\bX_i=\bx_i , \vec{\bX}_{{\cal J}^c}=\vec{\bx}_{{\cal J}^c} ,\vec{\bX}_{{\cal J}_i}}(y)
\Big)^{\frac{1}{1-s}}
dy \Big)^{1-s}
\nonumber \\
= &
\EE_{\vec{\bX}_{{\cal J}_i}}
\Big(
\int_{{\cal Y}} \sum_{\bx_i}
q^{-\frac{(k_n+k_n')}{1-s}|{\cal J}^c|}
\EE_{\vec{E}_{{\cal J}^c},G_1}
\Big(
\sum_{ \vec{\bx}_{{\cal J}^c} \in {\cal S}(G_2,\vec{E}_{{\cal J}^c}) }
P_{Y|\bX_i=\bx_i , \vec{\bX}_{{\cal J}^c}=\vec{\bx}_{{\cal J}^c} ,\vec{\bX}_{{\cal J}_i}}(y)
\nonumber \\
 & \cdot
\Big(
P_{Y|\bX_i=\bx_i , \vec{\bX}_{{\cal J}^c}=\vec{\bx}_{{\cal J}^c} ,\vec{\bX}_{{\cal J}_i}}(y)
+\sum_{ \vec{\bx}_{{\cal J}^c}':{\rm (C1)(C2)(C3)} }
P_{Y|\bX_i=\bx_i , \vec{\bX}_{{\cal J}^c}=\vec{\bx}_{{\cal J}^c}',\vec{\bX}_{{\cal J}_i}}(y)
\nonumber \\
& +\sum_{ \vec{\bx}_{{\cal J}^c}':{\rm (C1)(C2)(C4)} }
P_{Y|\bX_i=\bx_i , \vec{\bX}_{{\cal J}^c}=\vec{\bx}_{{\cal J}^c}',\vec{\bX}_{{\cal J}_i}}(y)
\Big)^{\frac{s}{1-s}}
\Big)
dy \Big)^{1-s}
\nonumber \\
= &
\EE_{\vec{\bX}_{{\cal J}_i}}
\Big(
\int_{{\cal Y}} \sum_{\bx_i}
q^{-\frac{(k_n+k_n')}{1-s}|{\cal J}^c|}
\Big(
q^{-(nl-(k_n+k_n'))|{\cal J}^c|}
\sum_{ \vec{\bx}_{{\cal J}^c}}
P_{Y|\bX_i=\bx_i , \vec{\bX}_{{\cal J}^c}=\vec{\bx}_{{\cal J}^c} ,\vec{\bX}_{{\cal J}_i}}(y)
\nonumber \\
 & \cdot
\EE_{G_1}
\Big(
P_{Y|\bX_i=\bx_i , \vec{\bX}_{{\cal J}^c}=\vec{\bx}_{{\cal J}^c} ,\vec{\bX}_{{\cal J}_i}}(y)
+\sum_{ \vec{\bx}_{{\cal J}^c}': {\rm (C1)(C2)(C3)}}
P_{Y|\bX_i=\bx_i , \vec{\bX}_{{\cal J}^c}=\vec{\bx}_{{\cal J}^c}' ,\vec{\bX}_{{\cal J}_i}}(y)
\nonumber \\ &
+\sum_{ \vec{\bx}_{{\cal J}^c}':{\rm (C1)(C2)(C4)}}
P_{Y|\bX_i=\bx_i , \vec{\bX}_{{\cal J}^c}=\vec{\bx}_{{\cal J}^c}' ,\vec{\bX}_{{\cal J}_i}}(y)
\Big)^{\frac{s}{1-s}}
\Big)
dy \Big)^{1-s}
\nonumber \\
\stackrel{(b)}{\le} & 
\EE_{\vec{\bX}_{{\cal J}_i}}
\Big(
\int_{{\cal Y}} \sum_{\bx_i}
q^{-\frac{(k_n+k_n')}{1-s}|{\cal J}^c|}
\Big(
q^{-(nl-(k_n+k_n'))|{\cal J}^c|}
\sum_{ \vec{\bx}_{{\cal J}^c}}
P_{Y|\bX_i=\bx_i , \vec{\bX}_{{\cal J}^c}=\vec{\bx}_{{\cal J}^c} ,\vec{\bX}_{{\cal J}_i}}(y)
\nonumber \\
 & \cdot
\Big(
\EE_{G_1}\Big(
P_{Y|\bX_i=\bx_i , \vec{\bX}_{{\cal J}^c}=\vec{\bx}_{{\cal J}^c} ,\vec{\bX}_{{\cal J}_i}}(y)
+\sum_{ \vec{\bx}_{{\cal J}^c}': {\rm (C1)(C2)(C3)} }
P_{Y|\bX_i=\bx_i , \vec{\bX}_{{\cal J}^c}=\vec{\bx}_{{\cal J}^c}' ,\vec{\bX}_{{\cal J}_i}}(y)
\nonumber \\ &
+\sum_{ \vec{\bx}_{{\cal J}^c}': {\rm (C1)(C2)(C4)} }
P_{Y|\bX_i=\bx_i , \vec{\bX}_{{\cal J}^c}=\vec{\bx}_{{\cal J}^c}' ,\vec{\bX}_{{\cal J}_i}}(y)
\Big)
\Big)^{\frac{s}{1-s}}
\Big)
dy \Big)^{1-s}
\nonumber \\
\stackrel{(c)}{\le} & 
\EE_{\vec{\bX}_{{\cal J}_i}}
\Big(
\int_{{\cal Y}} \sum_{\bx_i}
q^{-\frac{(k_n+k_n')}{1-s}|{\cal J}^c|}
\Big(
q^{-(nl-(k_n+k_n'))|{\cal J}^c|}
\sum_{ \vec{\bx}_{{\cal J}^c}}
P_{Y|\bX_i=\bx_i , \vec{\bX}_{{\cal J}^c}=\vec{\bx}_{{\cal J}^c} ,\vec{\bX}_{{\cal J}_i}}(y)
\nonumber \\
 & \cdot
\Big(
P_{Y|\bX_i=\bx_i , \vec{\bX}_{{\cal J}^c}=\vec{\bx}_{{\cal J}^c} ,\vec{\bX}_{{\cal J}_i}}(y)
+q^{(k_n+k_n')|{\cal J}^c|}
P_{Y|\bX_i=\bx_i ,\vec{\bX}_{{\cal J}_i}}(y)
\nonumber \\ &
+q^{(k_n+k_n')}
P_{Y|\bX_i=\bx_i , (\bX_j-\bX_{j'}=\bx_j-\bx_{j'})_{j,j' \in {\cal J}^c} ,\vec{\bX}_{{\cal J}_i}}(y)
\Big)^{\frac{s}{1-s}}
\Big)
dy \Big)^{1-s}
\nonumber \\
\stackrel{(d)}{\le} & 
\EE_{\vec{\bX}_{{\cal J}_i}}
\Big(
\int_{{\cal Y}} \sum_{\bx_i}
q^{(-nl-\frac{s}{1-s}(k_n+k_n'))|{\cal J}^c|}
\sum_{ \vec{\bx}_{{\cal J}^c}}
P_{Y|\bX_i=\bx_i , \vec{\bX}_{{\cal J}^c}=\vec{\bx}_{{\cal J}^c} ,\vec{\bX}_{{\cal J}_i}}(y)
\nonumber \\
 & \cdot
\Big(
P_{Y|\bX_i=\bx_i , \vec{\bX}_{{\cal J}^c}=\vec{\bx}_{{\cal J}^c} ,\vec{\bX}_{{\cal J}_i}}(y)^{\frac{s}{1-s}}
+q^{\frac{s}{1-s}(k_n+k_n')|{\cal J}^c|}
P_{Y|\bX_i=\bx_i ,\vec{\bX}_{{\cal J}_i}}(y)^{\frac{s}{1-s}}
\nonumber \\ &
+q^{\frac{s}{1-s}(k_n+k_n')}
P_{Y|\bX_i=\bx_i , (\bX_j-\bX_{j'}=\bx_j-\bx_{j'})_{j,j' \in {\cal J}^c} ,\vec{\bX}_{{\cal J}_i}}(y)^{\frac{s}{1-s}}
\Big)
dy \Big)^{1-s}
\nonumber 
\end{align}

\newpage
\begin{align}
= &
\EE_{\vec{\bX}_{{\cal J}_i}}
\Big(
\int_{{\cal Y}} \sum_{\bx_i}
q^{(-nl-\frac{s}{1-s}(k_n+k_n'))|{\cal J}^c|}
\sum_{ \vec{\bx}_{{\cal J}^c}}
P_{Y|\bX_i=\bx_i , \vec{\bX}_{{\cal J}^c}=\vec{\bx}_{{\cal J}^c} ,\vec{\bX}_{{\cal J}_i}}(y)^{\frac{1}{1-s}}
\nonumber \\
 & +
P_{Y|\bX_i=\bx_i ,\vec{\bX}_{{\cal J}_i}}(y)^{\frac{1}{1-s}}
\nonumber \\ &
+q^{(-\frac{s}{1-s}(k_n+k_n')-nl)(|{\cal J}^c|-1)}
\sum_{ \vec{\bx}_{{\cal J}^c}:(2)}
P_{Y|\bX_i=\bx_i , (\bX_j-\bX_{j'}=\bx_j-\bx_{j'})_{j,j' \in {\cal J}^c} ,\vec{\bX}_{{\cal J}_i}}(y)^{\frac{s}{1-s}}
\Big)
dy \Big)^{1-s}
\nonumber \\
=&
q^{-s (k_n+k_n')|{\cal J}^c|}
e^{s I_{\frac{1}{1-s}}^{\downarrow}(\bY; \bX_i ,\vec{\bX}_{{\cal J}^c} |\vec{\bX}_{{\cal J}_i} ) } 
+e^{s I_{\frac{1}{1-s}}^{\downarrow}
(\bY; \bX_i | \vec{\bX}_{{\cal J}_i} ) } 
+q^{-s(k_n+k_n')(|{\cal J}^c|-1)}
 e^{s I_{\frac{1}{1-s}}^{\downarrow}
(\bY; \bX_i , (\bX_j-\bX_{j'})_{j,j' \in {\cal J}^c}| \vec{\bX}_{{\cal J}_i} ) } ,
\end{align}
where each step can be shown as follows.
$(a)$ follows from the concavity of $x \mapsto x^{1-s}$ with $x\ge 0$.
$(b)$ follows from the concavity of $x \mapsto x^{\frac{s}{1-s}}$ with $x\ge 0$.
$(c)$ follows from the inequalities \eqref{HY11} and \eqref{HY12}. 
$(d)$ follows from the inequality 
$(x+y+z)^{\frac{s}{1-s}}\le 
x^{\frac{s}{1-s}}+y^{\frac{s}{1-s}}+z^{\frac{s}{1-s}}$ with $x,y,z\ge 0$.

\section{Achievability of mutual information rate}\Label{AC}
In this appendix, 
generalizing the method of \cite{HaVa} with $c=2$ to the case with a general positive integer $c$,
we show that the mutual information rate
$I(W)=I(Y;X_{[c]})[W]$ can be achieved in the sense of computation-and-forward
under the code construction presented  in {\bf Step (1)} in Section \ref{S4}.
That is, we show that Receiver can decode 
$F_{[c]}:=\oplus_{j=1}^c F_j=G_3(L_{[c]},M_{[c]})$ with an error probability approaching to zero
when the rate satisfies the condition \eqref{8-20-4}.
In this derivation, we employ basic knowledge for affine maps that is presented in Appendix A of \cite{HaVa}. 

For this aim, we define the degraded channel 
$W_{D}$ from ${\cal X}$ to ${\cal Y}$ as
\begin{align}
W_{D,x}:= \sum_{(x_1, \ldots, x_{c-1})\in \bF_q^{l(c-1)}  }q^{-l(c-1)}
W_{ (x_1, x_2, \ldots, x_{c-1}, x \ominus( \oplus_{i=1}^{c-1} x_i )) }.
\end{align}
Then, we find that $I(W)$ is the mutual information between the input $X$ and the output of 
the degraded channel $W_{D}$ when $X$ is subject to the uniform distribution on $\bF_q^{l}$.
 
In the presented encoder  in {\bf Step (1)} in Section \ref{S4},
the sender $A_i$ encodes the message $F_i$ to 
$G_1(F_i,E_i)$, and  
$\vec{E}=(E_1, \ldots, E_c)$ is subject to the uniform distribution on 
$ \FF_q^{c(nl -(k_n+k_n'))}$.
In the following, we assume that Receiver
makes the decoder dependently only of $G_1$ and $E_{[c]}:= \oplus_{i=1}^{c} E_i$.
Hence, $G_1$ and $E_{[c]}$ are fixed to $g_1$ and $e_{[c]}$.
When we denote $F_{[c]}  \in \bF_q^{k_n+k_n'}$ by $m$,
Receiver's receiving signal is subject to the following distribution; 
\begin{align}
&W^{(n)}_{(g_1(F_1,E_1), \ldots, g_1(F_{c-1},E_{c-1}), g_1(F_c,E_c))  } \nonumber\\
=&W^{(n)}_{(g_1(F_1,E_1), \ldots, g_1(F_{c-1},E_{c-1}), g_1(m \ominus (\oplus_{i=1}^{c-1} F_i) ,e_{[c]}
\ominus (\oplus_{i=1}^{c-1} E_i)))  }\nonumber\\
=&W^{(n)}_{(g_1(F_1,E_1), \ldots, g_1(F_{c-1},E_{c-1}), 
g_1(m ,e_{[c]}) \ominus (\oplus_{i=1}^{c-1} g_1 (F_i, E_i) )  }
\end{align}
Here, $(F_1,E_1), \ldots, (F_{c-1},E_{c-1})$ are independently subject to the uniform distribution 
on $\bF_q^{nl}$, and the decoder to recover $F_{[c]}$ does not depend on them.
Taking the expectation with respect to these variables, we have
\begin{align}
&\mathbb{E}_{(F_1,E_1), \ldots, (F_{c-1},E_{c-1})} 
W^{(n)}_{(g_1(F_1,E_1), \ldots, g_1(F_{c-1},E_{c-1}), 
g_1(m ,e_{[c]}) \ominus (\oplus_{i=1}^{c-1} g_1 (F_i, E_i) )  } \nonumber\\
=& 
W^{(n)}_{D, g_1(m ,e_{[c]}) } .
\end{align}
When Receiver applies the maximum likelihood decoder to the degraded channel $W^{(n)}$,
the randomized choice of
the affine code $G_1(m)+E_{[c]}$ satisfies the condition of Proposition 2 of \cite{HaVa} 
because 
the condition \eqref{HY8} and $G_2 \circ G_1 (F, E)=E$ imply the condition (38) in \cite[Appendix A]{HaVa}.
In the application of Proposition 2 of \cite{HaVa}, 
$G_1$ of this paper corresponds to $F$ in \cite[Appendix A]{HaVa}.
Hence,
the average decoding error probability for $F_{[c]}$
goes to zero
under a rate smaller than $I(W)=I(Y;X_{[c]})[W]$, i.e., 
under the condition \eqref{8-20-4}.




\end{document}